\documentclass[]{article}
\usepackage[a4paper]{geometry}
\geometry{hmargin=3cm, vmargin=3cm }
\usepackage[utf8]{inputenc}
\usepackage[T1]{fontenc} 
\usepackage[english]{babel}
\usepackage{amsmath,amssymb,latexsym,eufrak,euscript}
\usepackage{subfigure,pstricks,pst-node,pst-coil}

\usepackage{url,tikz}
\usepackage{pgflibrarysnakes}

\usepackage{multicol}

\usetikzlibrary{arrows}
\usetikzlibrary{automata}
\usetikzlibrary{snakes}
\usetikzlibrary{shapes}


\newcommand{\stirlingtwo}[2]{\genfrac{\lbrace}{\rbrace}{0pt}{}{#1}{#2}}




\newtheorem{definition}{Definition}
\newtheorem{theorem}[definition]{Theorem}
\newtheorem{proposition}[definition]{Proposition}
\newtheorem{example}[definition]{Example}

\newenvironment{proof}
{ \noindent {\sc Proof.\/}  }
{\null  \hfill $\Box$ \par\medskip \vspace{1em}}

\def \A {\mathcal{A}}
\def \T {\mathfrak{T}}
\def \M {\mathcal{M}}

\begin{document}
\label{firstpage}

  \title{Random Generation and Enumeration of Accessible Deterministic
    Real-time Pushdown Automata} 
\author{Pierre-Cyrille H\'eam \and Jean-Luc Joly}
 
\maketitle

\begin{center}{FEMTO-ST, CNRS UMR 6174, Universit\'e de Franche-Comt\'e, INRIA\\ 
    16 route de Gray,
    25030 Besan\c con Cedex, France \\
} 
\end{center}

\begin{abstract}
This paper presents a general framework for the uniform random generation
of deterministic real-time accessible pushdown automata. A polynomial time
algorithm to randomly generate a pushdown automaton having a fixed stack
operations total size is proposed. The influence of the accepting condition
(empty stack, final state) on the reachability of the generated automata is
investigated.
\end{abstract}

\section{Introduction}


Finite automata, of any kind, are widely used for their algorithmic
properties in many fields of computer science like model-checking,
pattern matching and machine learning. Developing new efficient algorithms
for finite automata is therefore a challenging problem still addressed by
many recent papers. New algorithms are frequently motivated by improvement
of worst cases bound. However, several examples, such as sorting algorithms,
primality testing or solving linear problems, show that worst case
complexity is not always the right way to evaluate the practical performance
of an algorithm. When benchmarks are not available, random testing, with a
controlled distribution, represents an efficient mean of performance
testing. In this context, the problem of uniformly generating finite
automata is a challenging problem.

This paper tackles with the problem of the uniform random generation of
real-time deterministic pushdown automata. Using classical combinatorial
techniques, we expose how to extend existing works on the generation of
finite deterministic automata to pushdown automata. More precisely, we show
in Section~\ref{sec:random} how to uniformly generate and enumerate (in the
complete case) accessible real-time deterministic pushdown automata. In
Section~\ref{sec:PDA}, it is shown that using a rejection algorithm it is
possible to efficiently generate pushdown automata that don't accept an
empty language. The influence of the accepting condition (final state or
empty-stack) on the reachability
of the generated pushdown automata is also experimentally studied  in
Section~\ref{sec:PDA}.

\paragraph{Related work.}  The 
enumeration of deterministic finite automata has been first investigated
in~\cite{Vyssotsky} and was applied to several subclasses of deterministic
finite
automata~\cite{Korshunov,DBLP:journals/eik/Korshunov86,robinson,DBLP:journals/dam/Liskovets06}. 
The uniform random generation of accessible deterministic complete automata
was initially proposed in~\cite{thesecril} for two-letter alphabets and the
approach was extended to larger alphabets
in~\cite{DBLP:journals/tcs/ChamparnaudP05}. Better algorithms can be found
in~\cite{DBLP:journals/tcs/BassinoN07,DBLP:conf/stacs/CarayolN12}. The
random generation of possibly incomplete automata is analyzed
in~\cite{incomplet}. The recent paper~\cite{DBLP:conf/wia/CarninoF11}
presents how to use Monte-Carlo approaches to generate deterministic acyclic
automata. As far as we know, the only work focusing on the random generation
of deterministic transducers is~\cite{DBLP:journals/tcs/HeamNS10}. This work
can be applied to the random generation of deterministic real-time
pushdown automata that can be possibly incomplete. However the requirement to fix the size of the stack operation on each transition, represents a major restriction.
The reader interested in the random generation of deterministic automata is
 referred to the survey~\cite{DBLP:conf/mfcs/Nicaud14}.

\section{Formal Background}\label{sec:bg}

We assume that the reader is familiar with classical notions on formal
languages. For more information on automata theory or on pushdown automata
the reader is referred to~\cite{Hopcroft} or to~\cite{Saka}. For a general
reference on random generation and enumeration of combinatorial structures
see~\cite{DBLP:journals/tcs/FlajoletZC94}. For any word $w$ on an alphabet $\Sigma$, $|w|$ denotes
its length. The empty word is denoted~$\varepsilon$.
The cardinal of a finite set $X$ is denoted $|X|$.

\paragraph{Deterministic Finite Automata.}
A {\it deterministic finite automaton} on $\Sigma$ is a tuple
$(Q,\Sigma,\delta,q_{\rm init},F)$ where $Q$ is a finite set of states,
$\Sigma$ is a finite alphabet, $q_{\rm init}\in Q$ is the initial state,
$F\subseteq Q$ is the set of final states and $\delta$ is a partial function
from $Q\times\Sigma$ into $Q$. If $\delta$ is not partial, i.e., defined for
each $(q,a)\in Q\times\Sigma$, the automaton is said {\it complete}. A
triplet of the form $(q,a,\delta(p,a))$ is called a {\it transition}. A
finite automaton is graphically represented by a labeled finite graph whose
vertices are the states of the automaton and edges are the transitions. A
deterministic finite automaton is {\it accessible} if for each state $q$
there exists a path from the initial state to $q$. Two finite automata
$(Q_1,\Sigma,\delta_1,q_{\rm init1},F_1)$ and $(Q_2,\Sigma,\delta_2,q_{\rm
init2},F_2)$ are isomorphic if they are identical up to the state's names,
formally if there exists a one-to-one function $\varphi$ from $Q_1$ into
$Q_2$ such that (1) $\varphi(q_{\rm init1})=q_{\rm init2}$,  (2)
$\varphi(F_1)=F_2$, and (3) $\delta_1(q,a)=p$ iff
$\delta_2(\varphi(q),a)=\varphi(p)$.

\paragraph{Pushdown Automata.}
A {\it real-time deterministic pushdown automaton}, RDPDA for short, is a
tuple $(Q,\Sigma,\Gamma,Z_{\rm init},\delta,q_{\rm init},F)$ where $Q$ is a
finite set of states, $\Sigma$ and $\Gamma$ are finite disjoint alphabets,
$q_{\rm init}\in Q$ is the initial state, $F\subseteq Q$ is the set of final
states, $Z_{\rm init}$ is the initial stack symbol and $\delta$ is a partial
function from $Q\times(\Sigma\times\Gamma)$ into $Q\times\Gamma^*$. If
$\delta$ is not partial, i.e., defined for each $(q,(a,X))\in Q\times(\Sigma\times\Gamma)$, the
RDPDA is said to be {\it complete}. 
A triplet of the form $(q,(a,X),w,p)$ with
$\delta(q,(a,X))=(p,w)$ is called a {\it transition} and $w$ is the {\it
output} of the transition. The {\it output size} of a transition
$(q,(a,X),w,p)$ is the length of $w$.  The {\it output size} of
a RDPDA is the sum of the sizes of its transitions.
The {\it underlying automaton} of an
RDPDA, is the finite automaton $(Q,\Sigma\times\Gamma,\delta^\prime,q_{\rm
init},F)$, with $\delta^\prime(q,(a,X))=p$ iff $\delta((q,(a,X)))=(p,w)$
for some $w\in\Gamma^*$. An RDPDA is {\it accessible} if its underlying
automaton is accessible. A transition whose output is $\varepsilon$ is
called a {\it pop transition}. An example of a complete accessible RDPDA is
depicted in Fig.~\ref{figAutoPile1}. The related underlying  finite
automaton is depicted in Fig.~\ref{figUnder}.

\begin{figure}[t!]
\centering
\subfigure{
\begin{tikzpicture}
\node (0) [state,fill=black!20,initial,initial text=] at (0,0) {$0$};
\node (1) [state,fill=black!20, accepting]at (3,0) {$1$};

\path[->,>=triangle 90] (0) [bend left] edge[above] node {$(a,X),Z$} (1);
\path[->,>=triangle 90] (1) [bend left] edge[below, pos=0.3] node
     {\begin{tabular}{c}$(b,X),\varepsilon$\\$(a,Z),XZX$\end{tabular}} (0);

\path[->,>=triangle 90] (0) [loop above] edge[above] node
     {\begin{tabular}{c}$(a,Z),ZZX$\\$(b,Z),ZX$\end{tabular}} ();
\path[->,>=triangle 90] (0) [loop below] edge[below] node
     {$(b,X),X$} (0);

\path[->,>=triangle 90] (1) [loop above] edge[above] node
     {\begin{tabular}{c}$(a,X),XX$\\$(b,Z),\varepsilon$\end{tabular}} ();

\node (txt) at (7.5,0.3){
$\begin{array}{rcl}
Q &= & \{0,1\}\\
\Sigma&=&\{a,b\}\\
\Gamma&=&\{Z,X\}\\
\delta&=&\{(0,(a,X))\mapsto (1,Z),(0,(b,X))\mapsto (0,X)\\
&&(0,(a,Z)) \mapsto (0,ZZX),(0,(b,Z))\mapsto (0,ZX)\\
&&(1,(a,X))\mapsto (1,XX), (1,(b,X)) \mapsto (0,\varepsilon)\\
&&(1,(a,Z))\mapsto(0, XZX),(1,(b,Z))\mapsto (1,\varepsilon)\\
q_{\rm init}&=&0,\quad Z_{\rm init}=Z,\quad F=\{1\}
\end{array}$};
\end{tikzpicture}
}
\caption{$P_{\rm toy}$, a complete RDPDA.\label{figAutoPile1}}
\end{figure}
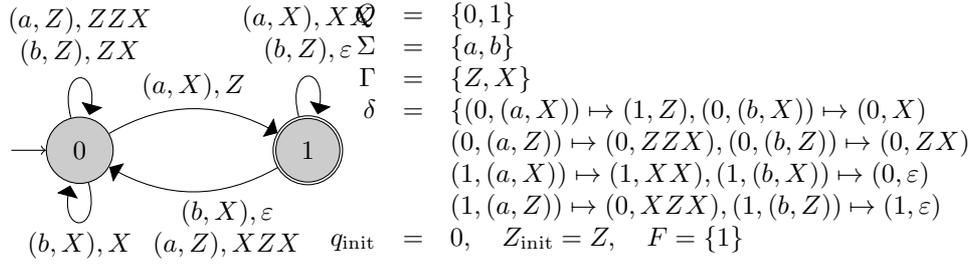

A {\it configuration} of a {RDPDA} is an element of $Q\times\Gamma^*$. The
{\it initial configuration} is $(q_{\rm init},Z_{\rm init})$. Two
configurations $(q_1,w_1)$ and $(q_2,w_2)$ are $a$-consecutive, denoted
$(q_1,w_1)\models_a(q_2,w_2)$ if the following conditions are satisfied:
\begin{itemize}
\item $w_1\neq \varepsilon$, and let $w_1=w_3X$ with $X\in \Gamma$,
\item  $\delta(q_1,(X,a))=(q_2,w_4)$ and $w_2=w_3w_4$.
\end{itemize}
Two configurations are {\it consecutive} if there is a letter $a$ such that there
are $a$-consecutive. A state $p$ of a RDPDA is {\it reachable} if there
exists a sequence of consecutive configurations $(p_1,w_1),\ldots,(p_n,w_n)$
such that $(p_1,w_1)$ is the initial configuration and $p_n=p$. Moreover, if
$w_n=\varepsilon$, $p$ is said to be {\it reachable with an empty stack}. A
RDPDA is {\it reachable} if all its states are reachable. Consider for
instance the RDPDA of Fig.~\ref{fig:acceptation}, where the initial stack
symbol is $X$. State $3$ is not reachable since the transition from $0$ to
$3$ cannot be fired. State $1$ is reachable with an empty stack. State $2$ is
reachable, but not reachable with an empty stack. Note that a reachable
state is accessible, but the converse is not true in general: accessibility
is a notion defined on the underlying finite automaton. 

 The configurations $(1,XZ)$ and $(2,XXZX)$ are $a$-consecutive on the RDPDA
depicted in Fig.~\ref{figAutoPile1}. 

\begin{figure}[h!]
\centering
\subfigure{
\begin{tikzpicture}
\node (0) [state,fill=black!20,initial,initial text=] at (0,0) {$0$};
\node (1) [state,fill=black!20, accepting]at (3,0) {$1$};

\path[->,>=triangle 90] (0) [bend left] edge[above] node {$(a,X)$} (1);
\path[->,>=triangle 90] (1) [bend left] edge[below, pos=0.3] node
     {\begin{tabular}{c}$(b,X)$\\$(a,Z)$\end{tabular}} (0);

\path[->,>=triangle 90] (0) [loop above] edge[above] node
     {\begin{tabular}{c}$(a,Z)$\\$(b,Z)$\end{tabular}} ();
\path[->,>=triangle 90] (0) [loop below] edge[below] node
     {$(b,X)$} (0);

\path[->,>=triangle 90] (1) [loop above] edge[above] node
     {\begin{tabular}{c}$(a,X)$\\$(b,Z)$\end{tabular}} ();

\node (txt) at (7.5,0.3){
$\begin{array}{rcl}
Q &= & \{0,1\}\\
\text{The alphabet}& \text{is} &\{a,b\}\times\{Z,X\}\\
\delta^\prime&=&\{(0,(a,X))\mapsto 1,(0,(b,X))\mapsto 0\\
&&(0,(a,Z)) \mapsto 0,(0,(b,Z))\mapsto 0\\
&&(1,(a,X))\mapsto 1, (1,(b,X)) \mapsto 0\\
&&(1,(a,Z))\mapsto 0,(1,(b,Z))\mapsto 1\\
&&q_{\rm init}=0\quad F=\{1\}
\end{array}$};
\end{tikzpicture}
}
\caption{Underlying automaton of $P_{\rm toy}$.\label{figUnder}}
\end{figure}
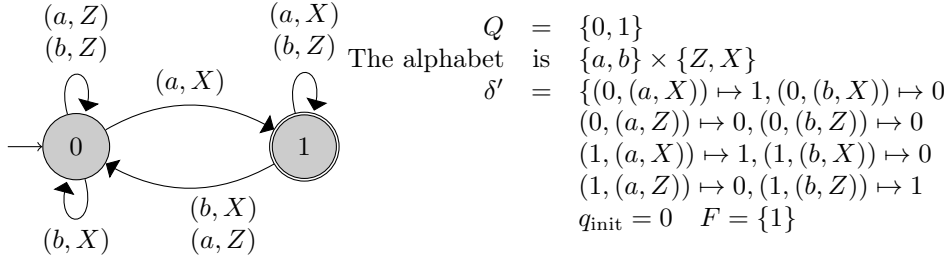
There are three main kinds of accepting
conditions for a word $u=a_1\ldots a_k\in\Sigma^*$ by an RDPDA:
\begin{itemize}
\item Under the {\it empty-stack
 condition},  $u$ is accepted if there exists 
configurations $c_1,\ldots,c_{k+1}$ such that $c_1$ is the initial
configuration, $c_i$ and $c_{i+1}$ are $a_i$-consecutive, and $c_{k+1}$ is
of the form $(q,\varepsilon)$. 
\item Under the {\it final-state
 condition}, $u$ is accepted  if there exists 
configurations $c_1,\ldots,c_{k+1}$ such that $c_1$ is the initial
configuration, $c_i$ and $c_{i+1}$ are $a_i$-consecutive, and $c_{k+1}$ is
of the form $(q,w)$, with $q\in F$.
 \item Under the {\it final-state and empty-stack
 condition}, $u$ is accepted  if there exists 
configurations $c_1,\ldots,c_{k+1}$ such that $c_1$ is the initial
configuration, $c_i$ and $c_{i+1}$ are $a_i$-consecutive, and $c_{k+1}$ is
of the form $(q,\varepsilon)$, with $q\in F$.
 \end{itemize}

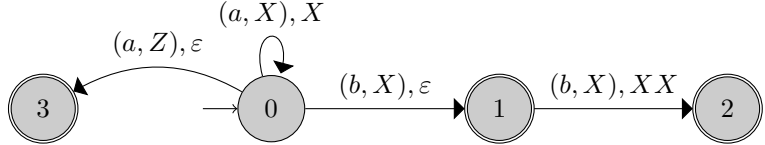
\begin{figure}[h!]
\begin{center}
\begin{tikzpicture}
\node (0) [state,fill=black!20,initial,initial text=] at (0,0) {$0$};
\node (1) [state,fill=black!20, accepting]at (3,0) {$1$};
\node (2) [state,fill=black!20, accepting]at (6,0) {$2$};
\node (3) [state,fill=black!20, accepting]at (-3,0) {$3$};

\path[->,>=triangle 90] (0) [] edge[above] node {$(b,X),\varepsilon$} (1);
\path[->,>=triangle 90] (1) [] edge[above] node
     {$(b,X),XX$} (2);

\path[->,>=triangle 90] (0) [loop above] edge[above] node
     {$(a,X),X$} ();
\path[->,>=triangle 90] (0) [bend right] edge[above] node
     {$(a,Z),\varepsilon$} (3);

\end{tikzpicture}
\end{center}
\caption{Acceptance conditions.}\label{fig:acceptation}
\end{figure}
Consider for instance the RDPDA of Fig.~\ref{fig:acceptation}, where the
initial stack symbol is $X$. With the empty-stack condition only the word
$b$ is accepted, as well as for the empty-stack and final state condition.
With the final state condition, the accepted language is $a^*(b+bb)$.

Two RDPDA $(Q_1,\Sigma,\Gamma,\delta_1,q_{\rm init1},F_1)$
and $(Q_2,\Sigma,\Gamma,\delta_2,q_{\rm init2},F_2)$ are isomorphic if
there exists a one-to-one function $\varphi$ from $Q_1$ into $Q_2$ such that
(i) $\varphi(q_{\rm init1})=q_{\rm init2}$, and (ii) $\varphi(F_1)=F_2$, and
 $$(iii)\quad \delta_1(q,(a,X))=(p,w)\quad \text{iff}\quad \delta_2(\varphi(q),(a,X))=(\varphi(p),w).$$
Note that if two RDPDA are isomorphic, then their underlying automata are
isomorphic too.

\paragraph{Generating Functions.}
A {\it combinatorial class} is a class $\mathfrak{C}$ of objects associated
with a size function $|.|$ from $\mathfrak{C}$ into $\mathbb{N}$ such that
for any integer $n$ there are finitely many elements of $\mathfrak{C}$ of
size $n$. The ordinary generating function for $\mathfrak{C}$ is
$C(z)=\sum_{c\in \mathfrak{C}} z^{|c|}$. The $n$-th coefficient of $C(z)$ is
exactly the number of objects of size $n$ and is denoted $[z^n]C(z)$. The
reader is referred to~\cite{FSbook} for the general methodology of analytic
combinatorics, ans especially the use of generating functions to count
objects. The following result~\cite[Theorem~VIII.8]{FSbook} will be useful
in this paper.

\begin{theorem}\label{thm:VIII.8}
Let $C(z)$ be an ordinary generating function satisfying: (1) $C(z)$ is
analytic at $0$ and have only positive coefficients, $(2)$ $C(0)\neq 0$ and
$(3)$ $C(z)$ is aperiodic. Let $R$ be the radius of convergence of $C(z)$
and $T=\lim_{x\to R^{-}} x\frac{C'(x)}{C(x)}$.
Let $\lambda\in ]0,T[$ and $\zeta$ be the  unique
solution of $x\frac{C'(x)}{C(x)}=\lambda$. Then, for $N=\lambda n$ an
integer, one has
$$[z^N]C(z)^n=\frac{C(\zeta)^n}{\zeta^{N+1}\sqrt{2\pi n\xi}}(1+o(1)),$$
where $\xi=\frac{d^2}{d z^2}\left(\log C(z)-\lambda\log (z)\right)|_{z=\zeta}.$
\end{theorem}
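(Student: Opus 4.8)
The statement is the classical ``large powers'' saddle-point estimate, so the plan is to extract the coefficient by Cauchy's formula and apply the saddle-point method. Write
$$[z^N]C(z)^n=\frac{1}{2\pi i}\oint_{|z|=r}\frac{C(z)^n}{z^{N+1}}\,dz=\frac{1}{2\pi i}\oint_{|z|=r}\frac{1}{z}\,e^{n\,h(z)}\,dz,\qquad h(z):=\log C(z)-\lambda\log z,$$
using $N=\lambda n$, where $0<r<R$ is to be chosen as the saddle point and the formula is valid for any such $r$ since $C(z)^n/z^{N+1}$ is analytic in the punctured disc $0<|z|<R$. First I would establish existence and uniqueness of the saddle, i.e.\ of a solution $\zeta\in(0,R)$ of $h'(z)=0$, which is exactly $zC'(z)/C(z)=\lambda$. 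On $(0,R)$ the map $\psi(x):=xC'(x)/C(x)=\sum_k kc_kx^k\big/\sum_k c_kx^k$ is the mean of the probability measure $\mu_x(k)\propto c_kx^k$; since $C(0)\neq 0$ one has $\psi(x)\to 0$ as $x\to 0^+$, by definition $\psi(x)\to T$ as $x\to R^-$, and $\psi$ is strictly increasing because $x\frac{d}{dx}\psi(x)$ equals the variance of $\mu_x$, which is positive (the support of $C$ is non-degenerate, $C$ being aperiodic and non-constant). Hence for $\lambda\in(0,T)$ the required $\zeta$ exists and is unique, and I take $r=\zeta$.

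Next I would localise the integral. Parametrising $z=\zeta e^{i\theta}$ gives $z^{-1}\,dz=i\,d\theta$, so
$$[z^N]C(z)^n=\frac{1}{2\pi}\int_{-\pi}^{\pi}e^{n\,h(\zeta e^{i\theta})}\,d\theta .$$
A Taylor expansion at $\theta=0$, using $h'(\zeta)=0$, yields $h(\zeta e^{i\theta})=h(\zeta)-\tfrac12\zeta^2\xi\,\theta^2+O(\theta^3)$ with $\xi=h''(\zeta)$ as in the statement; moreover $\zeta^2\xi$ is again the variance of $\mu_\zeta$, so $\xi>0$. On the central window $|\theta|\le n^{-2/5}$ the cubic remainder is $o(1)$ after multiplication by $n$, and the Gaussian integral $\frac{1}{2\pi}\int e^{-\frac12 n\zeta^2\xi\theta^2}\,d\theta=\frac{1}{2\pi\zeta}\sqrt{2\pi/(n\xi)}$, combined with $e^{n h(\zeta)}=C(\zeta)^n\zeta^{-N}$, produces exactly the main term $\dfrac{C(\zeta)^n}{\zeta^{N+1}\sqrt{2\pi n\xi}}$.

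The main obstacle is the tail bound: showing that the part of the integral with $|\theta|\ge n^{-2/5}$ is negligible compared with $e^{n h(\zeta)}$, and this is precisely where aperiodicity is used. One always has $|C(\zeta e^{i\theta})|\le\sum_k c_k\zeta^k=C(\zeta)$, and equality forces the nonzero monomials $c_k\zeta^k e^{ik\theta}$ to share a common argument; since $c_0\neq 0$ that argument is $0$, so $e^{ik\theta}=1$ for every $k$ in the support of $C$, and aperiodicity ($\gcd$ of the support equal to $1$) then forces $\theta\equiv 0\pmod{2\pi}$. Consequently $|C(\zeta e^{i\theta})|<C(\zeta)$ for $\theta\in(-\pi,\pi)\setminus\{0\}$; a compactness argument on a fixed arc $\delta\le|\theta|\le\pi$ gives a ratio $\le\rho<1$ there, while the refined estimate $|C(\zeta e^{i\theta})|\le C(\zeta)e^{-c\theta^2}$ for small $\theta$ handles the range $n^{-2/5}\le|\theta|\le\delta$; both contributions are then $O\!\big(e^{-c'n^{1/5}}\big)\cdot e^{n h(\zeta)}$ and absorbed into the $(1+o(1))$. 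Adding the central and tail estimates gives the claim. Note that if $C$ were $p$-periodic with $p\ge 2$, equality $|C(\zeta e^{i\theta})|=C(\zeta)$ would also hold at the $p$-th roots of unity on the circle, yielding $p$ coalescing contributions and a different constant, which is why the hypothesis cannot be dropped.
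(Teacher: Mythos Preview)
Your argument is correct and is precisely the standard saddle-point (``large powers'') proof. Note, however, that the paper does not give its own proof of this theorem: it is quoted verbatim as \cite[Theorem~VIII.8]{FSbook} and used as a black box, so there is nothing in the paper to compare your argument against beyond the reference itself, whose proof is essentially the one you have written.
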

In the above theorem, $T$ is called the \textit{spread} of the $C(z)$. 

\paragraph{Rejection Algorithms.}
A rejection algorithm is a probabilistic algorithm to randomly
generate an element in a set $X$, using an algorithm $A$ for
generating an element of $Y$ in the simple way: repeat $A$ until it
returns an element of $X$. Such an algorithm is tractable if the
expected number of iterations can be kept under control (for instance is
fixed): the probability that an element of $Y$ is in $X$ has to be large
enough.

\paragraph{Random Generation.}
 The theory of Generating Functions provides an efficient way to randomly
 and uniformly generate an element of size $n$ of a combinatorial class
 $\mathfrak{C}$ using a recursive approach~\cite{FSbook}. It requires a
 $O(n^2)$ precomputation time and each random sample is obtained in time
 $O(n\log n)$. Another efficient way to uniformly generate element of
 $\mathfrak{C}$ is to use Boltzmann
 samplers~\cite{DBLP:journals/cpc/DuchonFLS04}: the random generation of an
 object with a size about $n$ (approximate sampling) is performed in $O(n)$,
 while the random generation of an object with a size exactly $n$ is
 performed in expected time $O(n^2)$ using a rejection algorithm (without
 precomputation). Boltzmann samplers are quite easy to implement but are
 restricted to a limited number of combinatorial constructions. They also
 require the evaluation of some generating functions at some values of the
 variable.

\section{Random Generation and Enumeration of RDPDA}\label{sec:random}
In this section, $\Sigma$ and $\Gamma$ are fixed disjoint alphabets of
 respective cardinals $\alpha$ and $\beta$. We denote by $\T_{s,n,m}$
 combinatorial class of the RDPDA (on $\Sigma,\Gamma$) with $n$ states,
 $s$ transitions and with an output size of $m$, up to isomorphism. Note
 that this class is well defined since two isomorphic RDPDA have the same
 output size. Let $\rho=\alpha\beta$.

\subsection{Enumeration of RDPDA}\label{sec:enum}

We are interested in the random generation of accessible RDPDA up to
isomorphism. The class of all isomorphic classes of accessible automata on
$\Sigma\times\Gamma$ with $n$ states and $s$ transitions is denoted
$\mathfrak{A}_{s,n}$. Let $\psi$ be the function from $\T_{s,n,m}$ into
$\mathfrak{A}_{s,n}$ mapping RDPDA to their underlying automata. The number
of elements of $\psi^{-1}(\A)$, where $\A$ is an element of
$\mathfrak{A}_{s,n}$, is the number of possible output labelling of the $s$
transitions of $\A$, which only depends on $s$ and $m$ and is independent of
$\A$. We denote by $c_{s,m}$ this number of labelings. The following
proposition is a direct consequence of the above remark.

\begin{proposition}\label{prop:1}
One has $|\T_{s,n,m}|=|\mathfrak{A}_{s,n}|\cdot c_{s,m}.$
\end{proposition}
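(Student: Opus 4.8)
The plan is to establish the claimed factorization by showing that the fiber $\psi^{-1}(\A)$ over each $\A \in \mathfrak{A}_{s,n}$ has cardinality exactly $c_{s,m}$, a quantity depending only on $s$ and $m$; the identity then follows by summing over the fibers, since $\psi$ is a well-defined surjection onto $\mathfrak{A}_{s,n}$ (well-defined because isomorphic RDPDA have isomorphic underlying automata, and surjective because any accessible underlying automaton can be equipped with at least one output labelling). Formally, $\T_{s,n,m} = \bigsqcup_{\A \in \mathfrak{A}_{s,n}} \psi^{-1}(\A)$, so $|\T_{s,n,m}| = \sum_{\A} |\psi^{-1}(\A)|$, and once each term equals $c_{s,m}$ the sum is $|\mathfrak{A}_{s,n}| \cdot c_{s,m}$.

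First I would fix an accessible automaton $\A \in \mathfrak{A}_{s,n}$ — more precisely, a representative of its isomorphism class with a concrete set of $s$ transitions — and describe an element of $\psi^{-1}(\A)$ as the choice, for each of the $s$ transitions, of an output word in $\Gamma^*$, subject only to the global constraint that the total output size equals $m$. In other words, $|\psi^{-1}(\A)|$ is the number of ways to write $m$ as an ordered sum $m_1 + \cdots + m_s$ of nonnegative integers and then choose, for the $i$-th transition, a word of length $m_i$ over the $\beta$-letter alphabet $\Gamma$; this count is manifestly a function of $s$ and $m$ alone, and we name it $c_{s,m}$. (Concretely $c_{s,m} = [z^m]\left(\sum_{k \ge 0} \beta^k z^k\right)^s = [z^m](1-\beta z)^{-s}$, though the closed form is not needed for the proposition.) The key point to make precise here is that the underlying automaton determines exactly which $(q,(a,X))$ pairs carry a transition, hence the index set of size $s$ over which we are labelling, and nothing else constrains the labels.

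The one genuine subtlety — and the main obstacle — is the interaction with isomorphism: both $\T_{s,n,m}$ and $\mathfrak{A}_{s,n}$ are sets of isomorphism classes, so I must check that counting labellings of a fixed representative correctly counts isomorphism classes of RDPDA in the fiber. The right statement is that if $\A$ is accessible then it is rigid, i.e. its only automorphism is the identity: accessibility forces any automorphism $\varphi$ to fix the initial state and then, propagating along paths from $q_{\rm init}$ (using condition~(3) of automaton isomorphism and determinism of $\delta'$), to fix every state. Consequently two labellings of the fixed representative yield isomorphic RDPDA if and only if they are equal (an RDPDA-isomorphism would restrict to an automorphism of the underlying automaton, necessarily the identity, and then condition~(iii) forces the outputs to coincide), so the map from labellings to isomorphism classes in $\psi^{-1}(\A)$ is a bijection. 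With rigidity in hand, $|\psi^{-1}(\A)| = c_{s,m}$ for every $\A$, and the proposition follows.
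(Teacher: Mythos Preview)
Your proposal is correct and follows exactly the paper's approach: the paper simply observes that each fiber $\psi^{-1}(\A)$ has size $c_{s,m}$ and declares the proposition a direct consequence. You additionally spell out the rigidity argument (accessible deterministic automata have trivial automorphism group) to justify that distinct labellings give non-isomorphic RDPDA, a point the paper leaves implicit; this is a welcome clarification rather than a different route.
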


Proposition~\ref{prop:1} is the base of the enumeration for RDPDA.
Let $M(z)$ denote the ordinary generating function of
elements of $\Gamma^*$. Then, using Proposition~\ref{prop:1} and classical
constructions on generating functions, one has
\begin{equation}\label{eq:enum0}
|\T_{s,n,m}(\M)|=|\mathfrak{A}_{s,n}|.[z^m]M(z)^s.
\end{equation}

Equation \eqref{eq:enum0} will be exploited for complete finite automata
using the following result of~\cite{Korshunov} -- see
also~\cite{DBLP:conf/wia/BassinoDN07}.

\begin{theorem}[\cite{Korshunov}] \label{thm:K}
There exists a constant $\gamma_{\rho}$,
such that 
$|\mathfrak{A}_{\rho n,n}|\sim n\gamma_\rho  \stirlingtwo{\rho n}{n},$
where $\stirlingtwo{x}{y}$ denotes the Stirling numbers of the second kind.
\end{theorem}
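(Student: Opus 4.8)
The plan is to replace isomorphism classes by labelled objects through a canonical breadth-first form, identify the count with that of a family of words over $\{1,\dots,n\}$, evaluate it with the generating-function and saddle-point machinery behind Theorem~\ref{thm:VIII.8}, and read off $\gamma_\rho$. First I would fix the canonical form: an accessible deterministic automaton has, up to isomorphism, a unique representative on the vertex set $\{1,\dots,n\}$ with $q_{\rm init}=1$ in which the states $1,2,\dots,n$ are numbered in the order a breadth-first traversal from $q_{\rm init}$, scanning the $\rho$ letters in a fixed order, first reaches them; rigidity (an accessible automaton has no non-trivial automorphism fixing $q_{\rm init}$) is what makes this representative well defined, and recording the accepting set would only multiply everything by the independent factor $2^{n}$. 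Writing the $\rho n$ transition targets in traversal order then identifies $\mathfrak{A}_{\rho n,n}$ with the set of words $w\in\{1,\dots,n\}^{\rho n}$ such that, letting $t_m$ denote the position of the first occurrence of the letter $m$, one has $t_2<t_3<\dots<t_n$ (all defined) and $t_m\le(m-1)\rho$ for each $m\ge2$; so it suffices to show this set has $\sim\gamma_\rho\,n\,\stirlingtwo{\rho n}{n}$ elements.

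Next I would count these words. Fixing the positions $t_2<\dots<t_n$ cuts a word into blocks: each position before $t_2$ is forced to carry the letter $1$, each of the $g_m$ positions strictly between $t_m$ and $t_{m+1}$ may carry any of $1,\dots,m$, and each of the $g_n$ positions after $t_n$ any of $1,\dots,n$. Hence, dropping for the moment the constraints $t_m\le(m-1)\rho$, the number of words is $\sum_{\sum g_i=(\rho-1)n+1}\prod_{i=1}^{n}i^{g_i}=[z^{(\rho-1)n+1}]\prod_{i=1}^{n}(1-iz)^{-1}=\stirlingtwo{\rho n+1}{n}$ (equivalently, the words in which each of $2,\dots,n$ occurs, and does so in increasing order of first occurrence). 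Reinstating the positional constraints amounts to imposing the partial-sum, ballot-type, conditions $g_1+\dots+g_k\le k(\rho-1)$ for all $k$, so the sought count is $\stirlingtwo{\rho n+1}{n}$ restricted by these. As an independent check, the same quantity follows from the triangular identity $n^{\rho n}=\sum_{r=1}^{n}\binom{n-1}{r-1}A_r\,n^{\rho(n-r)}$, obtained by classifying all $n^{\rho n}$ complete transition structures on $\{1,\dots,n\}$ by their set of reachable states ($A_r$ being the number of labelled accessible complete transition structures on $\{1,\dots,r\}$), which determines the $A_r$ recursively.

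The final and main step, where the genuine work lies, is the asymptotics. Since $\stirlingtwo{N}{n}=[z^{N-n}]\prod_{i=1}^{n}(1-iz)^{-1}$ has its dominant singularity at $z=1/n$, with comparable contributions from the neighbouring poles $1/(n-1),1/(n-2),\dots$, a singularity analysis --- equivalently Theorem~\ref{thm:VIII.8} applied, after factoring $z^{n}$ out of $(e^{z}-1)^{n}$, to $C(z)=(e^{z}-1)/z$ with $\lambda=\rho-1$, whose saddle $\zeta$ solves $\zeta e^{\zeta}/(e^{\zeta}-1)=\rho$ --- gives $\stirlingtwo{\rho n+1}{n}\sim(\rho/\zeta)\,n\,\stirlingtwo{\rho n}{n}$, the Gaussian fluctuation factors $\sqrt{2\pi n\xi}$ cancelling in the ratio. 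One must then prove that the ballot-type conditions $g_1+\dots+g_k\le k(\rho-1)$ retain only a constant fraction of this count: heuristically the last block $g_n$ has size $\Theta(n)$ while the earlier $g_i$ are $O(1)$, so these conditions form a genuine fluctuation constraint whose limiting probability is a positive constant, and multiplying it by $\rho/\zeta$ produces $\gamma_\rho$. The delicate parts --- precisely where the estimates of~\cite{Korshunov} and the analytic-combinatorics treatment of~\cite{DBLP:conf/wia/BassinoDN07} do the heavy lifting --- are making this last reduction uniform (equivalently, bounding the tail of the inclusion--exclusion contributed by transition structures with many unreachable states) and verifying that it leaves the exponential order unchanged.
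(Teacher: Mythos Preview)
The paper does not prove Theorem~\ref{thm:K}: it is quoted as an external result of Korshunov (with a pointer to~\cite{DBLP:conf/wia/BassinoDN07}) and used as a black box in the derivation of~\eqref{eq:enum-fin}. There is therefore no ``paper's own proof'' to compare your attempt against.

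That said, your sketch is a faithful outline of the argument found in the cited references. The BFS canonical form to pass from isomorphism classes to labelled transition structures, the encoding as words $w\in\{1,\dots,n\}^{\rho n}$ with ordered first occurrences and the ballot-type constraints $t_m\le(m-1)\rho$, the identification of the unconstrained count with $\stirlingtwo{\rho n+1}{n}$ via $[z^{(\rho-1)n+1}]\prod_{i=1}^{n}(1-iz)^{-1}$, and the saddle-point asymptotics for the Stirling numbers through $C(z)=(e^{z}-1)/z$ are all correct and match the Bassino--Nicaud treatment. You are also candid that the genuinely hard step---showing the ballot constraints cut out only a positive constant fraction, uniformly in $n$---is precisely where~\cite{Korshunov} and~\cite{DBLP:conf/wia/BassinoDN07} do the work you do not reproduce. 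For the purposes of this paper, no proof was expected here; your write-up would belong in an appendix or could simply be replaced by the citation, as the authors do.
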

 
The case $s=\rho n$ corresponds to complete accessible automata.
We will particularly focus throughout this paper on  the complete case 
for a fixed average size $\lambda$ for the transitions: we assume that there is a
fixed $\lambda >0$ such that $m=\lambda s=\lambda n\rho$.
In this context, \eqref{eq:enum0} becomes

\begin{equation}\label{eq:enum-fin}
|\T_{s,n,m}(\M)|\sim n\gamma_\rho \stirlingtwo{\rho n}{n}[z^{\lambda
 n\rho}]M(z)^{n\rho}.
\end{equation}

Since $M(z)$ is the generating function of words on $\Gamma$, one has
$M(z)=\frac{1}{1-\beta z}$. 
Therefore (see~\cite{FSbook}), one has:
\begin{equation}\label{eq:fsn}
 [z^m]F_{s}(z)=\beta^m \dfrac{s
   (s+1)(s+2)\ldots(s+m-1)}{m!}.
\end{equation}

The generating function $F_{s}$ satisfies the hypotheses of
Theorem~\ref{thm:VIII.8}, with an infinite spread. Therefore for any strictly
positive $\lambda$ and any integer  $m=\lambda s$, one has

\begin{equation}\label{eq:app1}
\left[z^{\lambda s}\right]F_{s}(z)=\dfrac{C(\zeta)^{s}}{\zeta^{\lambda s+1}\sqrt{2\,\pi\,s\, \xi}}\left(1+o(1)\right),
\end{equation}
where $C(z)=\dfrac{1}{1-\beta z}$ and
$\zeta=\dfrac{\lambda}{\beta\left(\lambda+1\right)}$ is the unique solution
of $x\dfrac{C'(x)}{C(x)}=\lambda$. It follows
that
\begin{equation}\label{eq:equivc}
[z^{\lambda s}]F_{s}(z)  =  \dfrac{(\lambda+1)^{(\lambda+1)s+1}\beta^{\lambda s+1}}{\lambda^{\lambda s+1}\sqrt{2\pi  s \xi}}(1+o(1)).
\end{equation}

\noindent
In addition 
\begin{equation}
\xi=\dfrac{d^2}{d z^2}\left(\log C(z)-\lambda\log (z)\right)|_{z=\zeta}
=\dfrac { \left( \lambda+1 \right) ^{3}{\beta}^{2}}{\lambda}.
\end{equation}

\noindent
Consequently~\eqref{eq:equivc} can be rewritten as
\begin{equation}\label{eq:equivcc}
[z^{\lambda s}]F_{s}(z)  =\dfrac { \left( \lambda+1 \right) ^{(\lambda+1)\,s-1/2}{\beta}^{\lambda\,s}}
{{\lambda}^{\lambda\,s+1/2}\sqrt {2\,\pi \,s}}(1+o(1)).
\end{equation}
The following proposition is a direct consequence of the combination
of~\eqref{eq:equivcc}, Proposition~\ref{prop:1} and~ Theorem~\ref{thm:K}.

\begin{proposition}
The number $f_{\lambda,n}$ of complete accessible RDPDA
with $n$ states and with an output size of $\lambda n$, with $\lambda\geq 1$
a fixed rational number, satisfies
$$
f_{\lambda,n}= \gamma_\rho n \stirlingtwo{\rho n}{n}
\dfrac { \left( \lambda+1 \right) ^{(\lambda+1)n\,\rho-1/2}{\beta}^{\lambda\,n\,\rho}}
{{\lambda}^{\lambda\,n\rho+1/2}\sqrt {2\,\pi \,n\,\rho}} (1+o(1)),
$$
\end{proposition}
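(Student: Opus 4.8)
The plan is to assemble the estimate for $f_{\lambda,n}$ by chaining together the three ingredients already prepared in the text. First I would recall that by definition $f_{\lambda,n}=|\T_{\rho n,n,\lambda n\rho}(\M)|$, i.e.\ the number of complete accessible RDPDA with $n$ states and output size $\lambda n\rho$ (note that the statement writes ``output size $\lambda n$'' in the per-state normalisation, which unfolds to $m=\lambda n\rho$ once one counts the $\rho n$ transitions of a complete automaton). Then apply Proposition~\ref{prop:1} in the form of Equation~\eqref{eq:enum0}, with $s=\rho n$ and $m=\lambda n\rho$, to get
$$
f_{\lambda,n}=|\mathfrak{A}_{\rho n,n}|\cdot [z^{\lambda n\rho}]M(z)^{\rho n}.
$$

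Next I would substitute the two asymptotic factors. For the first factor, Theorem~\ref{thm:K} gives $|\mathfrak{A}_{\rho n,n}|\sim n\gamma_\rho\stirlingtwo{\rho n}{n}$. For the second factor, since $M(z)^{\rho n}=F_{\rho n}(z)$ with $F_s(z)=C(z)^s=(1-\beta z)^{-s}$, I would invoke Equation~\eqref{eq:equivcc} with $s$ replaced by $\rho n$, yielding
$$
[z^{\lambda n\rho}]M(z)^{\rho n}=\dfrac{(\lambda+1)^{(\lambda+1)n\rho-1/2}\beta^{\lambda n\rho}}{\lambda^{\lambda n\rho+1/2}\sqrt{2\pi n\rho}}(1+o(1)).
$$
Multiplying the two estimates and collecting the $(1+o(1))$ terms into a single one (a product of two sequences each of the form (constant asymptotics)$\cdot(1+o(1))$ is again of that form) produces exactly the displayed formula. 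One should check the admissibility hypotheses of Theorem~\ref{thm:VIII.8} are met by $C(z)=(1-\beta z)^{-1}$ — it is analytic at $0$ with positive coefficients, $C(0)=1\neq 0$, it is aperiodic, and its spread is infinite so every $\lambda>0$ (in particular every fixed $\lambda\geq 1$) is admissible — but this verification is already carried out in the paragraph preceding Equation~\eqref{eq:app1}, so here it suffices to cite it.

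The only genuinely non-routine point is bookkeeping the normalisation: the reader must be convinced that ``output size $\lambda n$'' in the proposition and the exponent $m=\lambda n\rho$ used in \eqref{eq:enum-fin} and \eqref{eq:equivcc} refer to the same quantity, and that $\lambda$ is chosen rational precisely so that $N=\lambda s=\lambda n\rho$ is an integer, as required for the coefficient extraction in Theorem~\ref{thm:VIII.8}. Once that alignment is made explicit, the remainder is a direct substitution with no further analysis, and in fact the proposition is, as the text says, ``a direct consequence of the combination of~\eqref{eq:equivcc}, Proposition~\ref{prop:1} and Theorem~\ref{thm:K}''. I expect the write-up to be two or three lines long.
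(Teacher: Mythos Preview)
Your proposal is correct and follows exactly the paper's approach: the paper gives no written proof beyond the sentence ``a direct consequence of the combination of~\eqref{eq:equivcc}, Proposition~\ref{prop:1} and Theorem~\ref{thm:K}'', and your plan simply unpacks that chain of substitutions. Your remark on the normalisation (``output size $\lambda n$'' in the statement versus $m=\lambda n\rho$ in the computations, as confirmed by the line $f_{\lambda,n}=|\T_{n\rho,n,\lambda\rho n}|$ just after the proposition) is a useful clarification of a genuine ambiguity in the text.
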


Note that $f_{\lambda,n}=|\T_{n\rho,n,\lambda\rho n}|$. The following result can be easily obtained.

\begin{proposition}\label{prop:shadow}
For RDPDA in $\T_{s,n,m}$, the average number of pop transitions
is $\frac{s(s-1)}{s+m-1}$.
\end{proposition}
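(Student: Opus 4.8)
The plan is to reduce the statement to a fact about output labelings alone and then apply~\eqref{eq:fsn}. First I would recall, from Proposition~\ref{prop:1} and the discussion preceding it, that an element of $\T_{s,n,m}$ is the same data as a pair made of an accessible underlying automaton $\A\in\mathfrak{A}_{s,n}$ together with an output labeling of $\A$, i.e.\ an assignment of a word over $\Gamma$ to each of the $s$ transitions of $\A$ so that the lengths sum to $m$; the number of such labelings is $c_{s,m}=[z^m]M(z)^s$ and is independent of $\A$. Since a transition is a pop transition precisely when its output is $\varepsilon$, the number of pop transitions of an RDPDA depends only on its labeling, not on $\A$. Because the fibers of $\psi$ all have the same size $c_{s,m}$, the uniform distribution on $\T_{s,n,m}$ induces the uniform distribution on the set of labelings of any fixed $\A$; hence the average number of pop transitions over $\T_{s,n,m}$ equals the average number of transitions labeled $\varepsilon$ over the $c_{s,m}$ labelings of a single fixed $\A$.

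Next I would use linearity of expectation: this average equals $\sum_{i=1}^{s} p_i$, where $p_i$ is the probability that transition $i$ receives the label $\varepsilon$ under the uniform distribution on labelings of total length $m$. By the symmetry of the $s$ transitions, all the $p_i$ are equal to a common value $p$, so the average equals $sp$. To evaluate $p$, note that the labelings in which transition $i$ is labeled $\varepsilon$ are in bijection with the labelings of the remaining $s-1$ transitions by words over $\Gamma$ of total length $m$, of which there are $[z^m]M(z)^{s-1}$. Therefore
$$p=\frac{[z^m]M(z)^{s-1}}{[z^m]M(z)^{s}}.$$

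Finally I would substitute the closed form~\eqref{eq:fsn}, namely $[z^m]M(z)^{t}=\beta^{m}\,t(t+1)\cdots(t+m-1)/m!$, which gives
$$p=\frac{(s-1)s(s+1)\cdots(s+m-2)}{s(s+1)(s+2)\cdots(s+m-1)}=\frac{s-1}{s+m-1},$$
so that the average number of pop transitions is $sp=\dfrac{s(s-1)}{s+m-1}$, as claimed. There is no genuine obstacle in this argument; the only point that deserves care is the first paragraph, namely checking that the equal-fiber structure coming from Proposition~\ref{prop:1} legitimately lets one replace the uniform distribution on $\T_{s,n,m}$ by the uniform distribution on the output labelings of one fixed underlying automaton, after which the computation is a routine manipulation of binomial coefficients.
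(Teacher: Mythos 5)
Your proof is correct, and it reaches the result by a more elementary route than the paper. The paper introduces the bivariate generating function $G_{s}(z,u)=\bigl(\tfrac{1}{1-\beta z}-1+u\bigr)^{s}$, where $u$ marks pop transitions, invokes the standard moment formula for parameters (Proposition~III.2 of~\cite{FSbook}) to write the mean as $[z^m]\partial_u G_s(z,u)|_{u=1}\,/\,[z^m]G_s(z,1)$, and then uses $\partial_u G_s = s\,G_{s-1}$ together with~\eqref{eq:fsn}. Your argument replaces this machinery with linearity of expectation plus the bijection between labelings having a fixed transition labeled $\varepsilon$ and labelings of the remaining $s-1$ transitions of total length $m$; note that this bijection is exactly the combinatorial content of the identity $\partial_u G_s(z,u)|_{u=1}=s\,M(z)^{s-1}$, so the two proofs compute the identical ratio $s\,[z^m]M(z)^{s-1}/[z^m]M(z)^{s}$ and differ only in packaging. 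What your version buys is self-containedness (no appeal to the symbolic-method moment formula) and an explicit justification of a point the paper leaves implicit, namely that the equal-fiber structure from Proposition~\ref{prop:1} lets one pass from the uniform distribution on $\T_{s,n,m}$ to the uniform distribution on the output labelings of a single fixed underlying automaton; what the paper's version buys is brevity and immediate access to higher moments or limit laws via further differentiation in $u$, should one want them.
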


\begin{proof}
We introduce the generating function $G_{s}(z,u)=\left(\dfrac{1}{1-\beta
z}-1+u\right)^s$ counting RDPDA where $z$ counts the output size and $u$ the
number of pop transitions. Using~\cite[Proposition III.2]{FSbook}, the
average number of pop transitions is
$\dfrac{[z^m]\frac{\partial}{\partial
u}G_{s}(z,u)_{|u=1}}{[z^m]G_{s}(z,1)}$. Since $\frac{\partial}{\partial
u}G_{s}(z,u)=s G_{s-1}(z,u)$, the proposition is a direct consequence
of~\eqref{eq:fsn}.
\end{proof}

\subsection{Random Generation}

The random generation is also based on Proposition~\ref{prop:1}; the general
schema for the uniform random generation of an element of $\T_{s,n,m}$
consists of two steps:
\begin{enumerate}
\item Generate uniformly an element $\A$ of $\mathfrak{A}_{s,n}$.
\item Generate the output of the transitions of $\A$ such that the sum of
their sizes is $m$.
\end{enumerate}

The first step can be performed in the general case
using~\cite{DBLP:journals/tcs/HeamNS10}. For generating complete
deterministic RDPDA -- when $s=n\alpha$ -- faster algorithms are described
in~\cite{DBLP:conf/wia/BassinoDN07} and~\cite{DBLP:conf/stacs/CarayolN12}.
In the general case, the complexity is $O(n^3)$ and for the complete case,
the complexity falls to $O(n^{\frac{3}{2}})$. The second step can be easily
done using the classical recursive approach as described in~\cite{DBLP:journals/tcs/FlajoletZC94} or
using Boltzmann samplers.


With a non-optimized Python
implementation running on a 2.5GHz personal computer it is possible to
generate 100 complete RDPDA with hundreds of states in few
minutes. 



\section{Influence of the Accepting Condition}\label{sec:PDA}
 Accessibility defined for a RDPDA does not mean that the accessible states
 can be reached by a calculus. Therefore the random generation may produce
  semantically RDPDA simpler than wanted. One of the
 requirements may be to generate RDPDA accepting non empty languages. 
Another requirement is to produce only reachable states. Finally, if the
final state-empty stack accepting condition is chosen, it is frequently  required
that final states are empty stack reachable.

\subsection{Emptiness of accepted languages}
\begin{proposition}\label{prop:nonvide}
Whatever the selected accepting condition, the probability that an
accessible RDPDA with $n$ states, $s$ transitions and an output size of
$m$, accepts a non empty language is greater or equal to
$\frac{s-1}{2\beta (s+m-1)}$.
\end{proposition}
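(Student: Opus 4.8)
The plan is to show that, with probability at least $\frac{s-1}{2\beta(s+m-1)}$, an accessible RDPDA in $\T_{s,n,m}$ admits a short accepting computation that works for \emph{all three} accepting conditions simultaneously. The key observation is that a pop transition on the initial stack symbol $Z_{\rm init}$, fired from the initial state on the letter it reads, immediately empties the stack and lands in some state. If that situation occurs, the one-letter word is accepted under the empty-stack condition, and since the reached state can be taken to be final (or we adjust the argument slightly), it is accepted under all conditions. So I would first isolate the event: there exists a letter $a\in\Sigma$ such that $\delta(q_{\rm init},(a,Z_{\rm init}))=(p,\varepsilon)$ for some state $p$. Conditioned on the underlying accessible automaton $\A$, the output labelling is chosen uniformly among the $c_{s,m}$ labellings, so it suffices to lower-bound the fraction of labellings in which at least one of the (at most $\beta$, at least one if $\A$ has a transition out of $q_{\rm init}$ reading $Z_{\rm init}$) relevant transitions receives output $\varepsilon$.

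The heart of the estimate is therefore a combinatorial ratio: among all ways to assign outputs $w_1,\dots,w_s\in\Gamma^*$ with $\sum|w_i|=m$, what fraction has a \emph{prescribed} transition — say the one out of $q_{\rm init}$ on $(a,Z_{\rm init})$, which exists for some $a$ since $\A$ is accessible with $n\ge 1$ and $s\ge 1$ — equal to $\varepsilon$? By Proposition~\ref{prop:shadow} the expected number of pop transitions in a uniformly random element of $\T_{s,n,m}$ is $\frac{s(s-1)}{s+m-1}$; by symmetry over the $s$ transition slots, each individual transition is a pop with probability exactly $\frac{s-1}{s+m-1}$. Hence the probability that the single transition out of $q_{\rm init}$ reading $Z_{\rm init}$ (a fixed slot) is a pop is $\frac{s-1}{s+m-1}$. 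This already handles the empty-stack and empty-stack–plus–final-state conditions when that target state is final; to cover the remaining cases and the possibility that $p\notin F$, I would note that the extra factor $\frac{1}{2\beta}$ in the claimed bound gives comfortable slack: even if only one letter $a$ works and we must further condition on the target state being among the final states (or on using the final-state-only condition, where we instead need \emph{some} computation reaching $F$), a crude counting argument shows the relevant fraction is at least $\frac{1}{2\beta}\cdot\frac{s-1}{s+m-1}$.

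Concretely, the steps in order: (1) fix an accessible underlying automaton $\A\in\mathfrak{A}_{s,n}$ and recall that, conditionally, the RDPDA is obtained by a uniformly random output labelling; (2) identify a transition $t_0$ leaving $q_{\rm init}$ — such a transition exists on some pair $(a,X)$ since all states are accessible and $s\ge 1$, and among its at most $\beta$ possible stack symbols, one is $Z_{\rm init}$ in the worst case contributing the factor $1/\beta$ (taking a further $1/2$ to absorb the final-state bookkeeping); (3) compute, via~\eqref{eq:fsn} exactly as in the proof of Proposition~\ref{prop:shadow}, that the probability a fixed transition slot has output $\varepsilon$ is $\frac{s-1}{s+m-1}$; (4) conclude that with probability at least $\frac{s-1}{2\beta(s+m-1)}$ the RDPDA has an accepting computation on a one-letter word valid under the chosen condition, hence accepts a non-empty language; (5) since the bound is uniform over $\A$, it holds unconditionally.

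The main obstacle is step (2)–(4): making the book-keeping honest across the three accepting conditions without losing more than the factor $\frac{1}{2\beta}$. For the empty-stack conditions the argument is clean — a pop on $Z_{\rm init}$ from $q_{\rm init}$ empties the stack in one step. For the final-state-only condition one must instead exhibit a computation reaching a final state; here I would observe that $q_{\rm init}$ itself may fail to be final, so the cleanest route is to argue that \emph{some} short computation (possibly two or three steps, or simply noting $F\neq\emptyset$ and tracking a path in $\A$ from $q_{\rm init}$ to $F$) succeeds with at least the stated probability, the point being that once a pop on $Z_{\rm init}$ is available the stack-emptiness is never the obstruction and only the state condition remains, which a loose factor of $2$ absorbs. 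I expect the delicate part to be phrasing this uniformly; the numerical estimate itself is immediate from~\eqref{eq:fsn}.
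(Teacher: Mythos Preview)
Your approach is the same as the paper's: fix one outgoing transition from $q_{\rm init}$ (which exists by accessibility), and lower-bound the probability that it is of the form $(q_{\rm init},(a,Z_{\rm init}),\varepsilon,p)$ with $p\in F$. The three factors are exactly the ones the paper uses: $\frac{1}{\beta}$ for the stack letter being $Z_{\rm init}$ (by symmetry of the stack alphabet in the uniform model), $\frac{1}{2}$ for the target state being final (final states are chosen independently with probability $1/2$), and $\frac{s-1}{s+m-1}$ for the output being $\varepsilon$ (your symmetry argument from Proposition~\ref{prop:shadow} is precisely what the paper does).

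Where you drift from the paper is in your last paragraph, and the drift is unnecessary. The single event above already covers \emph{all three} accepting conditions at once: after reading the one-letter word $a$, the configuration is $(p,\varepsilon)$ with $p\in F$, which is accepting for empty-stack, for final-state, and for empty-stack-and-final-state simultaneously. There is no separate argument needed for the final-state-only condition, no need to follow a path in $\A$ to $F$, and no ``loose factor of $2$'' to absorb anything---the $\tfrac{1}{2}$ is exactly the probability that $p\in F$, not slack. Likewise, your phrasing of the $1/\beta$ factor (``among its at most $\beta$ possible stack symbols'') is murkier than it needs to be: the clean statement is that, by symmetry of the letters of $\Gamma$ in the uniform distribution on accessible automata over $\Sigma\times\Gamma$, the fixed outgoing transition reads $Z_{\rm init}$ with probability $1/\beta$. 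Once you state the event cleanly and multiply the three probabilities, the proof is two lines, as in the paper.
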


\begin{proof}
Since the considered RDPDA are  accessible, there is at least one
outgoing transition from the initial state. We will evaluate the probability that
this transition is of the form $(q_{\rm init},a,Z_{\rm init},\varepsilon,p)$
with $p$ final. There is no condition on $a$. The probability that the stack
symbol is $Z_{\rm init}$ is $\frac{1}{\beta}$ since all letters have the
same role. The probability that $p$ is final is $\frac{1}{2}$
(see~\cite{DBLP:journals/tcs/ChamparnaudP05,DBLP:journals/tcs/BassinoN07}).
By Proposition~\ref{prop:shadow} the probability that this transition is a
pop transition is $\frac{s-1}{s+m-1}.$ It follows that the probability
that the transition has the claimed form is $\frac{s-1}{2\beta(s+m-1)}.$ If
this transition exists, the RDPDA accepts the word $a$, proving the
proposition.\hfill$\square$
\end{proof}

By Proposition~\ref{prop:nonvide}, if $m=\lambda s$, for a fixed
$\lambda$, then RDPDA accepting non-empty languages can be randomly
generated by a rejection algorithm, with an expected constant number
of rejects. Experiments show that most of the states are reachable (see
Tables~\ref{tab:xp3} and~\ref{tab:xp4}).

\subsection{Empty-stack Reachability}\label{sec:empty}

Proposition~\ref{prop:nonvide} shows it is possible to generate complete
RDPDA accepting a non-empty language (if $m$ and $s$ are of the same order).
However, it doesn't suffice since many states of generated automata can be
unreachable. Under the final-state and empty-stack condition of acceptance, a
final state that is not reachable with an empty stack is a useless final
state, i.e. it cannot be used as a final state to recognize a word -- but it
can be involved as any state for accepting a word. Using for
instance~\cite{AF-BW-PW-INF-97}, one can decide in polynomial time whether a
state is reachable with an empty stack.

\begin{table}
\begin{center}
\begin{tabular}{|c|c|c|c|c|c|c|c|c|}
\hline
number of states $\to$ & 5& 10 & 15 & 20  & 30 & 40 & 60 & 100\\
\hline 
$\lambda=0.5$ & 3.56 & 6.14 & 8.02 & 11.1 & 16.26&15&24.7& 49.5\\
\hline 
$\lambda= 1$ & 2.6& 4.62 & 4.7 & 6.16 & 7.06 &7.82&13.85&17.3\\
\hline 
$\lambda= 1.5$ & 2.36& 3.05 & 3.61 & 3.62 & 5.2 &5.5&5.68&5.8\\
\hline 
$\lambda= 2$ & 2.0 & 2.6 & 2.81 & 3.4 & 3.02 &3.1&3.21 &3.89\\
\hline 
$\lambda= 3$ & 1.65 & 1.8 & 1.83 & 1.8 & 2.26 &2.44&2.34&2.6\\
\hline 
$\lambda= 5$ & 1.3 & 1.41 & 1.43 & 1.4 & 1.42& 2.1&1.5&1.5\\\hline
\end{tabular}
\end{center}
\caption{Average number of reachable  states with an empty stack,
  $\alpha=2$, $\beta=2$}\label{tab:xp2}
\end{table}

Table~\ref{tab:xp2} reports experiments on the average number of empty-stack
reachable states. For this experiment, we consider complete and accessible
RDPDA with $\alpha=2$, and $\beta=2$. Since a state is final with a
probability $1/2$, dividing the number by 2 in the table provides the
average number of final states reachable with an empty stack. Experiments
show that if $\lambda$ is greater than 1, then the average number of states
reachable with an empty stack is quite small. Remind that $\lambda$ is the
average size of the outputs. For each case, 100 complete
RDPDA have been generated. Clearly the random generation of complete
accessible RDPDA based on the sizes of the output will not produce enough
pop-transitions to empty the stack. Adding a criterion on a minimal number
$k$ of pop-transitions may be a solution that can be achieved in the following
way~:

\begin{enumerate}
\item Choose uniformly $k$ transitions of the underlying finite automata
that will be pop-transitions. 
\item Decorate the $s-k$ other transitions with strings for a total size of $m$.
\end{enumerate}

The experiments reported in Table~\ref{tab:xp2} has also been
done for this procedure to compare the number of rejects. We choose the case
$\alpha=\beta=2$ again with at least 40\% of pop-transitions.
 The results are reported in
Table~\ref{tab:xp6}.

\begin{table}
\begin{center}
\begin{tabular}{|c|c|c|c|c|c|c|c|c|}
\hline
number of states $\to$ & 5& 10 & 15 & 20  & 30 & 40 & 60 & 100\\
\hline 
$\lambda=0.5$ & 4.05 & 7.75 & 11.54 & 16.09 &23.63 & 30.87 & 46.83 &80.62 \\
\hline 
$\lambda= 1$  & 3.09 &5.25  & 8.19 &  10.78&13.97 & 22.55 & 31.23 & 44.52\\
\hline 
$\lambda= 1.5$ & 2.94 & 4.44 & 5.91 & 8.24 & 9.06&  11.93& 18.77 & 29.58 \\
\hline 
$\lambda= 2$ &2.68  & 4.44 & 5.19 & 6.5 & 8.53& 9.39 & 12.84 & 19.91\\
\hline 
$\lambda= 3$ & 2.53 &  3.29& 4.38 &  4.81& 6.37&  7.49&  8.72& 12.18\\
\hline 
$\lambda= 5$  &  2.29& 3.31 & 3.81 & 4.53 &5.09 &  4.71&  5.28& 6.42\\\hline
\end{tabular}
\end{center}
\caption{Average number of reachable states
(at least 40\% of pop transitions); $\alpha=\beta=2$.}\label{tab:xp6}
\end{table}

Imposing a minimal number of pop transitions improves the efficiency
(relative to the number of reachable state) for small values of $\lambda$.
 However it is not sufficient when $\lambda\geq 1$.

\subsection{Reachability (with no stack condition)}

If we are now interested in the random generation with the {\it final state}
condition, regardless of the stack, it is interesting to know the number of
reachable states (which is on average twice the number of final reachable
states). Using~\cite{AF-BW-PW-INF-97}, the average number of reachable
states have been assessed experimentally. Results are reported in
Table~\ref{tab:xp3} for $\alpha=2$ and $\beta=2$ and in Table~\ref{tab:xp4}
for $\alpha=3$ and $\beta=5$.

\begin{table}
\begin{center}
\begin{tabular}{|c|c|c|c|c|c|c|c|c|}
\hline
number of states  $\to$& 10 & 20  & 30 & 40 & 50 & 60 & 80 & 100\\\hline
$\lambda =1$ &  8.29 & 14.89 & 21.5 & 26.93 & 32.48 & 35.55 & 44.4&52.86\\
\hline
$\lambda=2$ & 8.73 & 16.35 & 25.3 & 33.45 & 39.56 & 47.99 & 62.32 & 81.02\\
\hline 
$\lambda=3$ & 8.84 & 17.67 & 27.14 & 36.19 & 45.7 & 54.69 & 73.35 & 89.26\\
\hline 
$\lambda=5$ & 9.23 & 18.3 & 28.06 & 37.47 & 47.61 & 56.7 & 76.11 & 95.15\\
\hline 
\end{tabular}
\end{center}
\caption{Average number of reachable  states, $\alpha=2$ and $\beta=2$.}\label{tab:xp3}
\end{table}

For the random generation of complete accessible RDPDA with a final state
accepting condition,  our framework seems to be  suitable: most of the
states are reachable. For the two other accepting conditions, the value
of $\lambda$ has to be small. Otherwise, there will be too few pop
transitions to clean out the stack. Proposition~\ref{prop:shadow}
confirms this outcome since averagely the number of pop transitions is
close to $\frac{\alpha\beta n}{\lambda+1}$.

\begin{table}
\begin{center}
\begin{tabular}{|c|c|c|c|c|c|c|c|c|}
\hline
number of states  $\to$& 10 & 20  & 30 & 40 & 50 & 60 & 80 & 100\\\hline
$\lambda =1$ &  9.72 & 19.48 & 29.34 & 39.71 & 49.16 & 59.49 &79.6&99.4\\
\hline
$\lambda=2$ & 9.9 & 19.7 & 29.9 & 39.8 & 49.4 & 59.6 & 79.7 & 99.5\\
\hline 
$\lambda=3$ & 9.93 & 19.9 & 29.92 & 39.9 & 49.81 & 59.9 & 79.6 & 99.1\\
\hline 
$\lambda=5$ & 9.95 & 19.96 & 29.93 & 39.9 & 49.86 &59.89  & 79.79 & 99.75\\
\hline 
\end{tabular}
\end{center}
\caption{Average number of reachable  states,  $\alpha=3$ and $\beta=5$.}\label{tab:xp4}
\end{table}

\subsection{A Rejection Algorithm for Reachable Complete RDPDA}

\begin{table}
\begin{center}
\begin{tabular}{|c|c|c|c|c|c|c|c|c|c|}
\hline

&number of states $\to$& 10 & 20  & 30 & 40 & 50 & 60 & 80 & 100\\\hline

&$\lambda=1$& 293.1 & - & -  & - & - & - & - & -\\ 
$\alpha=2$&$\lambda=1.5$& 24.6 & 88.8 & 278.0  & - & - & - & - & - \\
$\beta=2$&$\lambda=2$& 6.9&  20.3& 14.9 & 13.2 &17.9 & 25.2 & 65.9& 95.8\\
&$\lambda=3$&1.6& 1.5& 1.8& 2.0 & 1.9 &2.2 &2.1 &  2.1\\
&$\lambda=5$&1.1& 1.2& 1.1& 1.2 & 1.1 &1.2 &1.2 &  1.2\\
\hline
&$\lambda=1$&  3.8 & 5.5 & 9.4 &15.7  & 38.4& 39.3 &76.9 &76.9\\
$\alpha=4$&$\lambda=1.5$&  1.7&  2.0& 1.7 & 1.8 & 1.9& 2.0 & 2.0&1.9\\
$\beta=2$&$\lambda=2$&  1.3&  1.3& 1.31 &1.3 & 1.2 & 1.2& 1.1 & 1.2\\
&$\lambda=3$&  1.1&  1.1& 1.1 &1.1 & 1.0 & 1.1& 1.1 & 1.1\\
&$\lambda=5$&  1.0&  1.0& 1.0 &1.0 & 1.0 & 1.0& 1.0 & 1.0\\
\hline
& $\lambda=1$& 2.3 & 5.5&16.9 &23.8&20.1&44.1&80.4&214.2\\
$\alpha=2$&$\lambda=1.5$& 1.7& 1.8 &1.2&2.6&3.0&2.0&1.6&1.9\\
$\beta=4$&$\lambda=2$& 1.4& 1.2 & 1.2&1.1&1.3&1.4&1.3&1.1\\
&$\lambda=3$& 1.0 & 1.2& 1.2 &1.0 &1.1&1.1&1.0&1.2\\
&$\lambda=5$& 1.1  &1.0 & 1.0& 1.0&1.0&1.0&1.0&1.0\\
\hline
\end{tabular}
\end{center}
\caption{Average number of random generations to obtain a reachable RDPDA.}\label{tab:xp5}
\end{table}

A natural question is to consider how to generate a complete accessible
RDPDA with exactly $n$ reachable states. An easy way would be to use
a rejection approach by generating a complete accessible RDPDA with $n$
states until obtaining a reachable RDPDA. Results presented in
Tables~\ref{tab:xp3} and~\ref{tab:xp4} seems to prove that this approach
might be fruitful for the parameters of Table~\ref{tab:xp4} but more
difficult for the parameters of Table~\ref{tab:xp3}. Several experiments have been
performed to evaluate the average number of rejects and results are reported
in Table~\ref{tab:xp5}: the average number random generations of RDPDA
used to produce 10 reachable RDPDA was reported. When a ``$-$'' is reported
in the table, it means that after 300 rejects, no such automata was
obtained. These results seem to show that the rejection approach is
tracktable if $\lambda \geq 2$ and if the alphabets are not too small. With
$\alpha=\beta=2$, it works for $\lambda \geq 3$ and for smaller $\lambda$'s
when the number of states is small.

\section{Conclusion}

\begin{table}
\begin{tabular}{|ll|}

\hline Accepting Condition & \\
\hline Empty Stack & 
\begin{tabular}{l} 
$\bullet$ General frameworks does not work: the number of states  \\reachable
  with an empty-stack is too small.\\
$\bullet$ Fixing a minimal number of pop transitions (see
  Section~\ref{sec:empty})\\ works for $\lambda < 1$. 
\end{tabular} \\
\hline 

Final States & 
\begin{tabular}{l}$\bullet$ General framework works: a significant number 
of states are\\ reachable.\\
$\bullet$ A rejection approach is tractable for generating reachable\\ RDPDA,
when both $\lambda \geq 1.5$ and the alphabets are large enough. 
\end{tabular}\\
\hline

\end{tabular}\\

\caption{Random Generation of RDPDA.}\label{tab:conclusion}
\end{table}

In this paper a general framework for generating accessible deterministic
pushdown automata is proposed. We also experimentally showed that with some
accepting conditions, it is possible to generate pushdown automata where
most states are reachable. The results on the random generation are
synthesized in Table~\ref{tab:conclusion}. In a future work we plan to investigate how
to randomly generate real-time deterministic automata, with an empty-stack
accepting condition and again, which most states are reachable. We also plan
to remove the {\it real-time} assumption, but it requires a deeper work on
the underlying automata.

\bibliographystyle{alpha}
\bibliography{pda}

\begin{thebibliography}{FWW97}

\bibitem[BDN07]{DBLP:conf/wia/BassinoDN07}
F.~Bassino, J.~David, and C.~Nicaud.
\newblock A library to randomly and exhaustively generate automata.
\newblock In {\em CIAA 2007}, volume 4783 of {\em Lecture Notes in Computer
  Science}, pages 303--305, 2007.

\bibitem[BDN09]{incomplet}
F.~Bassino, J.~David, and C.~Nicaud.
\newblock Enumeration and random generation of possibly incomplete
  deterministic automata.
\newblock {\em Pure Mathematics and Applications}, 19:1--16, 2009.

\bibitem[BN07]{DBLP:journals/tcs/BassinoN07}
F.~Bassino and C.~Nicaud.
\newblock Enumeration and random generation of accessible automata.
\newblock {\em Theor. Comput. Sci.}, 381(1-3):86--104, 2007.

\bibitem[CF11]{DBLP:conf/wia/CarninoF11}
V.~Carnino and S.~De Felice.
\newblock Random generation of deterministic acyclic automata using markov
  chains.
\newblock In {\em CIAA 2011}, volume 6807 of {\em Lecture Notes in Computer
  Science}, pages 65--75, 2011.

\bibitem[CN12]{DBLP:conf/stacs/CarayolN12}
A.~Carayol and C.~Nicaud.
\newblock Distribution of the number of accessible states in a random
  deterministic automaton.
\newblock In {\em STACS 2012}, volume~14 of {\em LIPIcs}, pages 194--205, 2012.

\bibitem[CP05]{DBLP:journals/tcs/ChamparnaudP05}
J.-M. Champarnaud and Th. Parantho{\"e}n.
\newblock Random generation of dfas.
\newblock {\em Theor. Comput. Sci.}, 330(2):221--235, 2005.

\bibitem[DFLS04]{DBLP:journals/cpc/DuchonFLS04}
Ph. Duchon, Ph. Flajolet, G.~Louchard, and G.~Schaeffer.
\newblock Boltzmann samplers for the random generation of combinatorial
  structures.
\newblock {\em Combinatorics, Probability {\&} Computing}, 13(4-5):577--625,
  2004.

\bibitem[FS08]{FSbook}
Ph. Flajolet and R.~Sedgewick.
\newblock {\em Analytic Combinatorics}.
\newblock Cambridge University Press, 2008.

\bibitem[FWW97]{AF-BW-PW-INF-97}
A.~Finkel, B.~Willems, and P.~Wolper.
\newblock A direct symbolic approach to model checking pushdown systems
  (extended abstract).
\newblock In {\em {INFINITY}'97}, volume~9 of {\em Electronic Notes in
  Theoretical Computer Science}, pages 27--39, 1997.

\bibitem[FZC94]{DBLP:journals/tcs/FlajoletZC94}
Philippe Flajolet, Paul Zimmermann, and Bernard~Van Cutsem.
\newblock A calculus for the random generation of labelled combinatorial
  structures.
\newblock {\em Theor. Comput. Sci.}, 132(2):1--35, 1994.

\bibitem[HNS10]{DBLP:journals/tcs/HeamNS10}
P.-C. H{\'e}am, C.~Nicaud, and S.~Schmitz.
\newblock Parametric random generation of deterministic tree automata.
\newblock {\em Theor. Comput. Sci.}, 411(38-39):3469--3480, 2010.

\bibitem[HU79]{Hopcroft}
J.~Hopcroft and J.~Ullman.
\newblock {\em Introduction to Automata Theory, Languages and Computation}.
\newblock Addison-Wesley, 1979.

\bibitem[Kor78]{Korshunov}
D.~Korshunov.
\newblock Enumeration of finite automata.
\newblock {\em Problemy Kibernetiki}, 34:5--82, 1978.

\bibitem[Kor86]{DBLP:journals/eik/Korshunov86}
A.~D. Korshunov.
\newblock On the number of non-isomorphic strongly connected finite automata.
\newblock {\em Elektronische Informationsverarbeitung und Kybernetik},
  22(9):459--462, 1986.

\bibitem[Lis06]{DBLP:journals/dam/Liskovets06}
V.A. Liskovets.
\newblock Exact enumeration of acyclic deterministic automata.
\newblock {\em Discrete Applied Mathematics}, 154(3):537--551, 2006.

\bibitem[Nic00]{thesecril}
C.~Nicaud.
\newblock {\em Etude du comportement en moyenne des automate finis et des
  langages rationnels}.
\newblock PhD thesis, Universit\'e Paris VII, 2000.

\bibitem[Nic14]{DBLP:conf/mfcs/Nicaud14}
Cyril Nicaud.
\newblock Random deterministic automata.
\newblock In Erzs{\'{e}}bet Csuhaj{-}Varj{\'{u}}, Martin Dietzfelbinger, and
  Zolt{\'{a}}n {\'{E}}sik, editors, {\em Mathematical Foundations of Computer
  Science 2014 - 39th International Symposium, {MFCS} 2014, Budapest, Hungary,
  August 25-29, 2014. Proceedings, Part {I}}, volume 8634 of {\em Lecture Notes
  in Computer Science}, pages 5--23. Springer, 2014.

\bibitem[Rob85]{robinson}
R.~Robinson.
\newblock {\em Counting Strongly Connected finite Automata}, pages 671--685.
\newblock Graph theory with Applications to Algorithms and Computer Science.
  Wiley, 1985.

\bibitem[Sak09]{Saka}
J.~Sakarovitch.
\newblock {\em Elements of Automata Theory}.
\newblock Cambridge University Press, 2009.

\bibitem[Vys59]{Vyssotsky}
V.~Vyssotsky.
\newblock A counting problem for finite automata.
\newblock Technical report, Bell Telephon Laboratories, 1959.

\end{thebibliography}

\newpage


\end{document}